\newtheorem{theorem}{Theorem}[section]
\newtheorem{lemma}[theorem]{Lemma}
\newcommand{\floor}[1]{\left\lfloor #1\right\rfloor}
\newcommand{\ceil}[1]{\left\lceil #1\right\rceil}
\title{Total Domination in Unit Disk Graphs}
\author{Sangram K. Jena \thanks{sangram@iitg.ac.in}}
\author{Gautam K. Das \thanks{gkd@iitg.ac.in}\thanks{corresponding author}}
\affil{Department of Mathematics\\Indian Institute of Technology Guwahati }
\begin{document}

\maketitle

\begin{abstract}
Let $G=(V,E)$ be an undirected graph. We call $D_t \subseteq V$ as a total dominating set (TDS) of $G$ if each vertex $v \in V$ has a dominator in $D$ other than itself. Here we consider the TDS problem in unit disk graphs, where the objective is to find a minimum cardinality total dominating set for an input graph.  We prove that the TDS  problem is NP-hard in unit disk  graphs. Next, we propose an 8-factor approximation algorithm for the problem. The running time of the proposed approximation algorithm is $O(n \log k)$, where $n$ is the number of vertices of the input graph and $k$ is output size. We also show that TDS problem admits a PTAS in unit disk graphs.

{\bf keywords:}
Total dominating set, approximation algorithm, PTAS, unit disk graph
\end{abstract}

\section{Introduction}
Let us consider a simple undirected graph $G=(V, E)$. The \emph{open neighbourhood} (\emph{resp. closed neighbourhood}) of a vertex $v \in V$ is the set $N_G(v) = \{u\in V: (u,v)\in E\}$  (resp. $N_G[v] = N_G(v) \cup \{v\}$).  A \emph{dominating set} (DS) of $G$ is a subset $D \subseteq V$ such that for each vertex $v\in V$, $D\cap N_G[v]\geq 1$. A \emph{total dominating set} (TDS) is a subset $D_t \subseteq V$ of $G$ such that for each vertex $v \in V$, $D_t \cap N_G(v) \geq 1$. Therefore,  a vertex $v \in D$ (dominating set) dominates all its neighbours and itself whereas a vertex $v \in D_t$ (total dominating set) dominates all its neighbours other than itself. The objective of TDS (resp. DS) problem is to find a minimum size subset $D_t \subseteq V$ (resp. $D \subseteq V$) such that $D_t$ (resp. $D$) dominates all the vertices in $V$.

The intersection graph of equal-radii disks in the plane is called a \emph{unit disk graph} (UDG). Given a set $S = \{d_1, d_2,\ldots, d_n\}$ of equal-radii $n$ circular disks in the plane, each with diameter 1, the corresponding UDG $G = (V, E)$ is defined as follows: each vertex $v_i \in V$ corresponds to the disk $d_i \in S$, and $(v_i, v_j) \in E$ if and only if the Euclidean distance between center of the disks $d_i$ and $d_j$ is less than or equal to 1.

\subsection{Related Work}
In 1980, Cockayne et al. \cite{cockayne1980} introduced the total domination problem and proved that for any connected graph $G$ of 
$n (\geq 3)$ vertices the cardinality of minimum total dominating set, denoted by  $\lambda_t$, is less than or equal to $\frac{2}{3}n$ i.e., $\lambda_t \leq \frac{2}{3}n$. 
Brigham et al.  \cite{brigham2000connected} proved that the total domination number is exactly $\frac{2}{3}n$ for the connected graph $G$ of order $n (\geq 3)$, where $G$ is either $C_3$ (cycle graph of 3 vertices), $C_6$ or 2-corona of some connected graph. 
Later, Sun \cite{sun1995upper} improved the bound to $\lambda_t \leq \floor{\frac{4}{7}(n+1)}$, for connected graphs having order $n$ with minimum degree at least 2. 
Chv{\'a}tal and McDiarmid \cite{chvatal1992small} and Tuza \cite{tuza1990covering} independently proved a theorem concerning transversals in hypergraphs, which gives a bound on total domination number. The bound is $\lambda_t\leq \frac{n}{2}$ for the graphs with order $n$ and minimum degree at least 3. 
For the graphs with minimum degree at least 4, Thomass{\'e} and Yeo \cite{thomasse2007total} proposed a result for hypergraphs, which bounds the total domination number by $\lambda_t\leq \frac{3}{7}n$. 
In \cite{delavicna2007some}, DeLaVi{\c{n}}a et al. proved that the total domination number of any connected graph is equal to the total domination number of a spanning tree of the same graph. 
Another interesting aspect of trees with respect to total domination is that it is possible to characterize some vertices that are in every total dominating set or not in any total dominating set \cite{cockayne2003vertices}. Furthermore, Haynes and Henning established three equivalent conditions for a tree to have a unique minimum total dominating set \cite{haynes2002trees}. 
Chellali and Haynes \cite{chellali2006note} proved $\lambda_t\geq \frac{n+2-\ell}{2}$ for a nontrivial tree of $n$ vertices with $\ell$ leaves. 
Dorfling et al. \cite{dorfling2006domination} bound the total domination number of planar graphs having different diameter and radius. 
Pfaff et al. \cite{pfaff1983np} showed that computing $\lambda_t$ for general graphs is NP-complete. 
In the same paper, they also showed that calculating $\lambda_t$ for bipartite graphs remains NP-complete. 
However, a linear time algorithm exists for computing $\lambda_t$ in tree graph \cite{laskar1984algorithmic}. 
The total domination number in case of star graphs, complete graphs, binary star graphs and complete bipartite graphs is 2 \cite{amos2012total}. 
In the same article, they have observed that for cycles and paths, total domination number can be calculated in polynomial time. 
They have also established a set of relations between  (i) $\lambda_t$ and the maximum degree, and (ii) $\lambda_t$ and the cut vertices of the graph. 
See \cite{amos2012total}, \cite{haynes1998fundamentals}, \cite{haynes1998domination}, \cite{henning2009survey},  \cite{henning2013total} for detailed survey on the TDS problem.

\subsection{Our Contribution} \label{contribution}
In this paper, we consider the total dominating set problem in unit disk graphs. In Section \ref{hardness}, we show that the decision version of the TDS problem is NP-complete in unit disk graphs. We propose an almost linear time 8-factor approximation algorithm in Section \ref{sec:apx}. We also show that the problem admits a PTAS in  Section  \ref{sec:ptas}. Finally, we conclude the paper in Section \ref{conclusion}.

\section{NP-Completeness} \label{hardness}
In this section, we show that the TDS problem in UDGs is NP-complete. The {\it vertex cover} (VC) problem in planar graph with maximum degree 3 is known to be NP-hard \cite{garey}. To prove NP-hardness result of the TDS problem in UDGs, we use polynomial time reduction from vertex cover problem in planar graph to it. Now, we define decision version of the TDS problem in UDGs and vertex cover problem in planar graphs as follows:

\begin{description}
 \item [The TDS problem in UDGs] (\textsc{Tds-Udg})
 \item [Instance:] A unit disk graph $G$ and an integer $k(>0)$.
 \item [Question:] Does $G$ has a TDS of size at most $k$?
\end{description}

\begin{description}
 \item [The VC problem in planar graphs] (\textsc{Vc-Pla})
 \item [Instance:] A planar graph $G$ with maximum degree 3 and an integer $k(>0)$.
 \item [Question:] Does $G$ has a VC of size at most $k$?
\end{description}
 
\begin{lemma}[\cite{mishra2020DAM}] \label{cor:embedding}
Let $G=(V,E)$ be a planar graph with maximum degree 3. The graph $G$ can be embedded in linear time on a planar grid of size $4 \times 4$ using $O(|V|^2)$ area such that the coordinate of each vertex $v \in V$ is $(4i,4j)$ for some integers $i, j$ and each edge $e \in E$ is a finite sequence of consecutive line segments of length 4 units along the grid lines. 
\end{lemma}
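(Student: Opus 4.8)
The plan is to obtain the embedding from a standard planar orthogonal grid drawing and then uniformly rescale it. First I would fix a combinatorial planar embedding of $G$, which exists because $G$ is planar and can be computed in linear time by a planarity-testing algorithm (e.g.\ Hopcroft--Tarjan). Since $\Delta(G)\le 3\le 4$, I can then invoke the classical fact that every planar graph of maximum degree at most $4$ admits a planar \emph{orthogonal} grid drawing: every vertex is placed at an integer grid point, every edge is drawn as a rectilinear polyline whose segments run along the grid lines and whose bends occur at integer grid points, no two edges cross, and the whole drawing fits in a bounding box of size $O(n)\times O(n)$, i.e.\ $O(n^2)$ area. For bounded-degree planar graphs such a drawing is computable in linear time (for instance through a planar $st$-numbering and the associated visibility representation, which is then converted into an orthogonal drawing); minimizing the number of bends is \emph{not} needed here, only the existence of some such drawing, which keeps the construction linear.

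With such a drawing on the integer lattice in hand, the next step is to apply the scaling map $(x,y)\mapsto(4x,4y)$ to every vertex position and every bend point. After scaling, each vertex lies at a point $(4i,4j)$ for integers $i,j$, exactly as required, and every coordinate appearing in the drawing is a multiple of $4$. Consequently each maximal straight portion of an edge joins two lattice points of $4\mathbb{Z}\times 4\mathbb{Z}$ and has length a positive multiple of $4$; subdividing each such portion into pieces of length exactly $4$, I can write every edge as a finite sequence of consecutive axis-parallel segments of length $4$ whose endpoints lie on the coarse grid $4\mathbb{Z}\times 4\mathbb{Z}$. Scaling multiplies all distances by $4$, so the bounding box becomes $O(4n)\times O(4n)$, still $O(n^2)$ area, and the coordinate map costs only $O(n)$ time, so the overall running time stays linear. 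This also clarifies the ``$4\times 4$ grid'': vertices and bends live on the lattice whose cells are $4\times 4$, and the scaling is precisely what creates uniform spacing of $4$ between consecutive lattice points used by the drawing.

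The step I expect to carry the real weight is the invocation of the linear-time orthogonal grid-drawing subroutine: I must be certain that a planar, crossing-free rectilinear layout on an $O(n)\times O(n)$ integer grid genuinely exists for every degree-$3$ planar graph and is obtainable in linear time. For maximum degree at most $3$ this is comparatively benign---each vertex needs at most three of its four incident grid directions, so a free port is always available to route an edge and to place its bends---but a careful write-up must handle vertices of low degree, the treatment of bridges and multiple edges, and the routing of each edge so that segments belonging to distinct edges never overlap. Once that layout is secured, the remaining scaling and subdivision are routine and deterministic, and the stated bounds on area, on the coordinates $(4i,4j)$, on the segment length $4$, and on the running time follow immediately.
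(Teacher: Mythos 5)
The paper offers no proof of this lemma to compare against: it is imported verbatim, with a citation, from \cite{mishra2020DAM}, and is ultimately the classical grid-embedding theorem of Valiant and of Itai--Papadimitriou--Szwarcfiter (every planar graph of maximum degree at most $4$ can be embedded in linear time on an integer grid of $O(|V|^2)$ area with vertices at lattice points and edges drawn as rectilinear paths) rescaled by a factor of $4$; indeed the paper's bibliography still carries the Hopcroft--Tarjan and Itai et al.\ entries that this proof chain uses, even though the text never cites them. Your proposal reconstructs exactly that standard argument: a linear-time combinatorial planar embedding, a linear-time crossing-free orthogonal drawing on an $O(n)\times O(n)$ integer grid (Valiant for existence; Tamassia--Tollis or Biedl--Kant type algorithms for linear-time construction), the scaling map $(x,y)\mapsto(4x,4y)$, and subdivision of each maximal axis-parallel portion into segments of length exactly $4$. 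The post-processing steps are airtight: after scaling, every vertex and every bend lies on $4\mathbb{Z}\times 4\mathbb{Z}$, each straight portion has length a positive multiple of $4$, so the subdivision is exact and all subdivision points again lie on the coarse grid, and neither the $O(n^2)$ area bound nor the linear running time is affected. You also correctly identify the single load-bearing ingredient --- the linear-time orthogonal grid-drawing theorem for degree-$\le 3$ planar graphs --- and since that is a genuine, well-documented result, invoking it is as legitimate as the paper's own decision to cite rather than prove. In short, your proposal is correct and is precisely the standard proof of which this paper records only the statement.
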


\begin{figure}[!ht]
  \centering
  \begin{minipage}{.45\textwidth}
  \hspace{-2.9cm}
  \centering
  \includegraphics[scale=0.73]{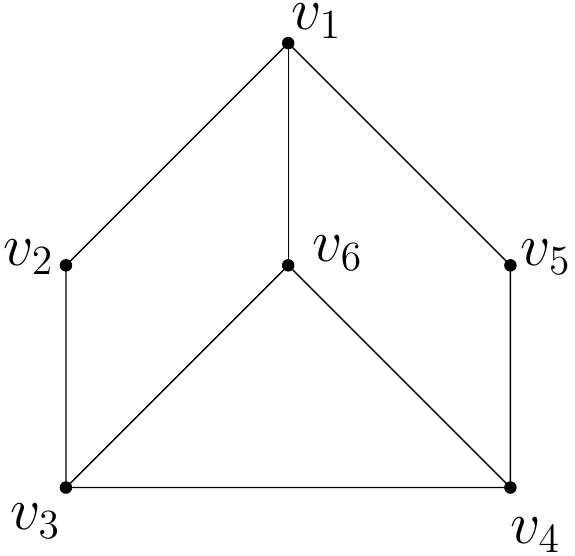} \\\hspace{-2.9cm} {(a)}
  \end{minipage}
  \begin{minipage}{.50\textwidth}
  \hspace{-1.9cm}
  \centering
  \includegraphics[scale=0.76]{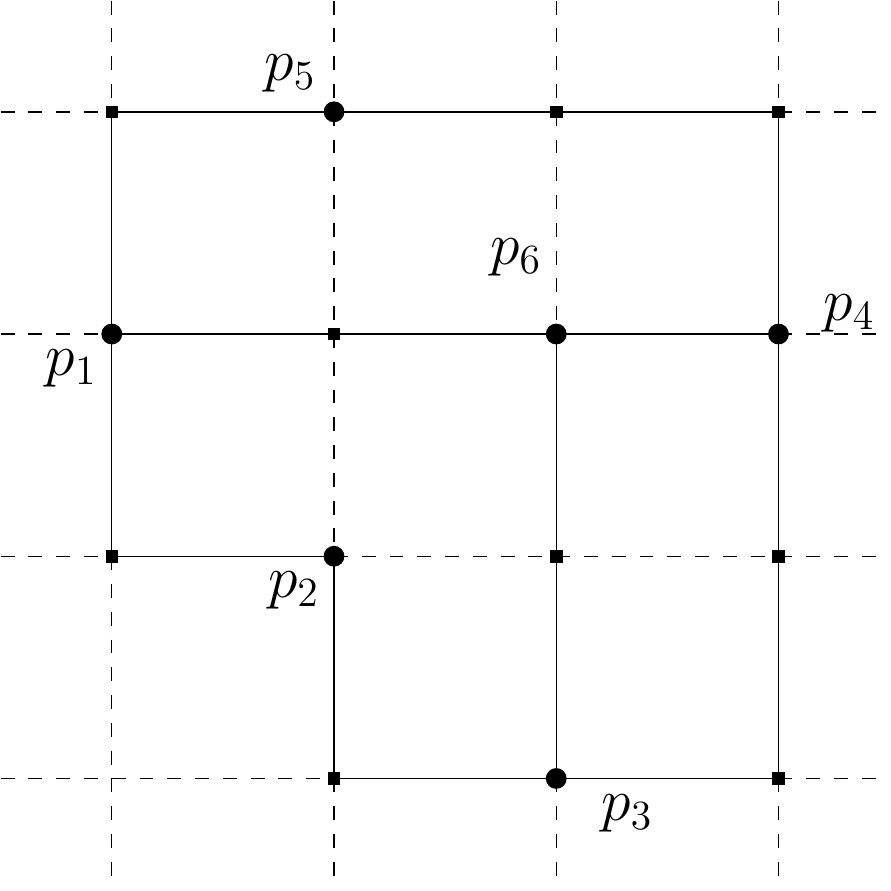}\\\hspace{-1.7cm}
  {(b)}
  \end{minipage}
  \begin{minipage}{.9\textwidth}
  \centering
  \includegraphics[scale=0.9]{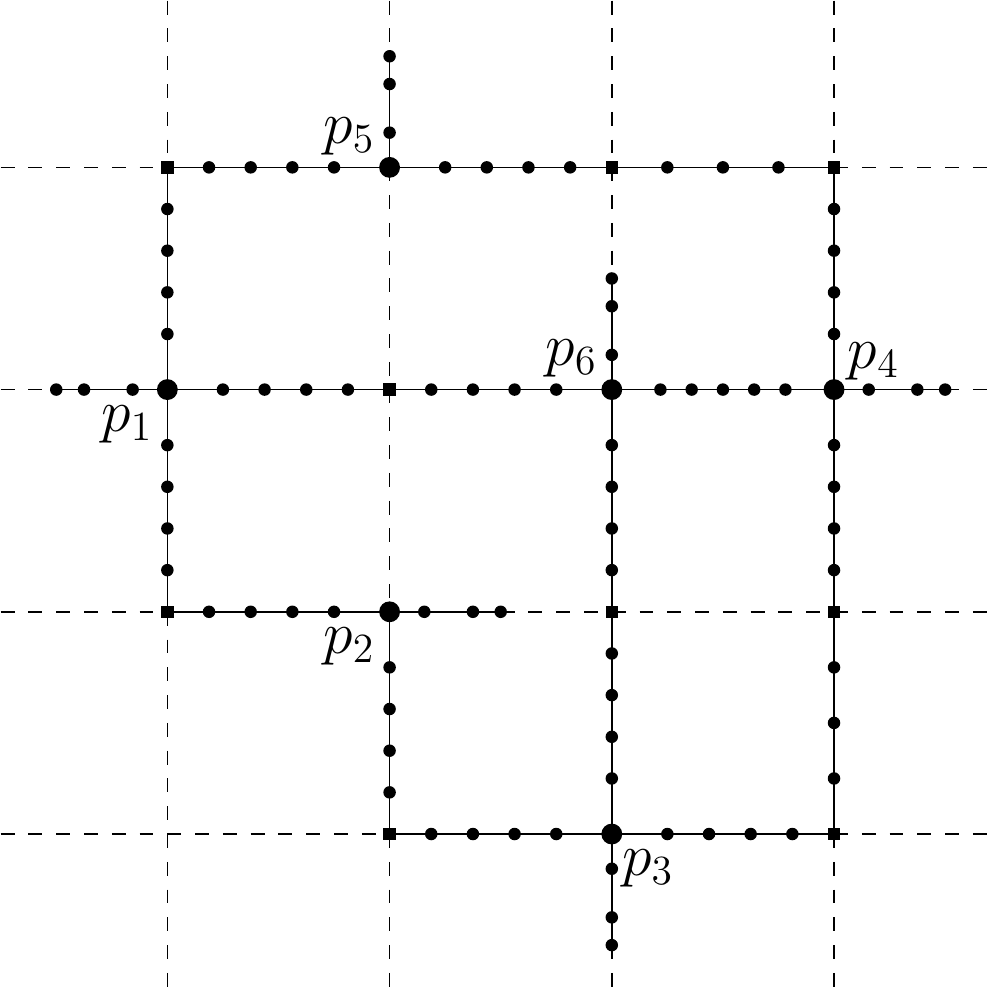}\\
  {(c)}
  \end{minipage}
    \caption{(a) A planar graph $G$ with maximum degree 3, 
  (b) its embedding on a $4 \times 4$ grid, and (c) construction of an UDG from the embedding.}\label{graph_grid}
  \end{figure}

\begin{lemma}\label{embedding-lemma}
 For a given \textsc{Vc-Pla} instance $G=(V,E)$  with at least one edge, an instance $G'=(V',E')$ of \textsc{Tds-Udg} can be constructed in polynomial-time.
\end{lemma}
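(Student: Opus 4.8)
The plan is to build $G'$ in two stages: first draw $G$ on the integer grid using Lemma~\ref{cor:embedding}, then replace that drawing by a family of unit disks whose intersection graph is the desired \textsc{Tds-Udg} instance. By Lemma~\ref{cor:embedding} every vertex of $G$ sits at a grid point $(4i,4j)$ and every edge is drawn as a sequence of axis-parallel segments of length $4$, so the drawn length of each edge $e$ is a multiple of $4$, say $4\ell_e$, and the whole picture occupies $O(|V|^2)$ area. I would lean heavily on this rigid, length-$4$-quantized geometry, because it guarantees that the only disk--disk adjacencies in $G'$ are the ones I deliberately create along a single edge: disks belonging to different non-incident edges, or to far-apart vertex gadgets, stay at Euclidean distance greater than $1$ thanks to the grid spacing, so no spurious edges appear in $G'$.

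Next I would specify the gadgets. For each edge $e$ I place a \emph{chain} of disks at equal spacing along its drawn curve, with consecutive centres at distance at most $1$ (hence adjacent) and non-consecutive centres at distance greater than $1$, so the chain is an induced path of $G'$. The number of disks on the chain is chosen to have a prescribed residue modulo a small constant, so that any total dominating set, restricted to the chain, is forced into a rigid selection pattern whose ``phase'' at each end records whether the corresponding endpoint of $e$ has been declared a cover vertex. At each vertex $v$ I attach a constant-size cluster (the vertex disk together with a few mutually adjacent auxiliary disks): selecting this cluster corresponds to placing $v$ in the cover, the cluster's disks dominate one another so the \emph{total} requirement is met locally, and the vertex disk is made adjacent to the first disk of every chain incident to $v$ so that an active endpoint can cheaply seed the domination of that chain. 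The intended correspondence is that a minimum TDS of $G'$ selects, at the vertex clusters, exactly a minimum vertex cover of $G$, plus a fixed per-edge and per-vertex overhead.

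Granting the gadgets, the claim of this lemma is then routine. All centres are rational points computed directly from the embedding, so $G'$ is a genuine unit disk graph; the number of disks is $O\!\left(\sum_e \ell_e\right)=O(|V|^2)$ because the total drawn edge length is $O(|V|^2)$; and each disk together with its incident edges in $G'$ is fixed by a purely local computation, so the entire instance is produced in time polynomial in $|V|$. I would also record an explicit target $k'$ for $G'$ as an affine function of $k$ and the gadget sizes, for use in the accompanying equivalence lemma.

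The main obstacle is the gadget calibration rather than the bookkeeping. Because total domination demands that \emph{every} chosen disk have a chosen neighbour, I must rule out ``cheating'' solutions: ones in which disks of one chain help dominate a neighbouring chain, or in which a vertex cluster is totally dominated without committing to a cover decision, or in which the forced phases at a shared vertex conflict. Pinning down the per-edge disk counts (their residues modulo the chosen constant) and the cluster structure so that the forced local pattern is simultaneously consistent across each shared vertex and impossible to undercut is the delicate part; once these counts are fixed, the polynomial-time and polynomial-size claims asserted by this lemma follow immediately.
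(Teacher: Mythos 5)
Your skeleton matches the paper's proof: embed $G$ on the grid via Lemma~\ref{cor:embedding}, lay induced-path chains of disk centers along the drawn edges, attach a constant-size gadget at each vertex, and count $O(|V|^2)$ points to get polynomial time. However, there is a genuine gap: the entire content of this lemma is an \emph{explicit} construction (Theorem~\ref{thm:main} later refers back to "the steps mentioned in Lemma~\ref{embedding-lemma}"), and your proposal defers exactly that part. You leave the chain lengths ("a prescribed residue modulo a small constant") and the vertex-gadget structure as unresolved "calibration," i.e., you give a plan for a construction rather than a construction. The paper pins these down concretely: for a one-segment edge, five interior points at distances $0.98, 1.49, 2, 2.51, 3.02$ from $p_i$; for longer edges, four points at distances $1, 1.75, 2.5, 3.25$ from each endpoint, a point at each interior grid joint, and three points on each middle segment. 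This yields exactly $4\ell' + 1$ interior points for an edge drawn with $\ell'$ segments (so $|A| = 4\ell + m$ overall), and that residue of $1$ modulo $4$ is precisely what later lets a two-chosen/two-skipped pattern anchor at a cover endpoint. Without fixing such numbers, the sufficiency counting in the theorem ($|A \cap T| \geq 2\ell$, and $2\ell' + 1$ points forced when neither endpoint is chosen) has nothing to bind to.

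Second, your vertex gadget is miscalibrated for \emph{total} domination. You propose a clique of mutually adjacent auxiliary disks and intend "selecting this cluster" to encode $v \in C$. But in a TDS every vertex of $G'$, including each auxiliary disk, must have a selected neighbour; since the auxiliaries' only neighbours lie inside the cluster, every cluster is forced to be (at least partially) selected in \emph{every} TDS, whether or not $v$ is in the cover, so cluster selection cannot encode cover membership. The paper's gadget avoids this conflation: it attaches a pendant induced path $p_i$--$x_i$--$y_i$--$z_i$ (stub points at distances $0.3, 1.1, 1.4$, so only consecutive points are adjacent), which forces $y_i$ and one of $\{x_i, z_i\}$ into every TDS — that is the fixed $+2n$ overhead — while the vertex point $p_i$ itself remains free and serves as the cover indicator. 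Keeping your clique design would require reworking both the indicator and the target bound $k + 2\ell + 2n$ used in the equivalence.
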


\begin{proof}
Let $V = \{v_1, v_2, \ldots, v_n\}$ and $E = \{e_1, e_2, \ldots, e_m\}$. The construction of $G'$  from the graph $G$ is described in four steps. \\ 

\noindent
{\bf (a) Embedding: } We first embed $G$ into a planar grid of size $4 \times 4$ using Lemma \ref{cor:embedding}. On the embedding, each 
of the vertex $v_i \in V$ becomes grid point $p_i$ and each edge $e_j \in E$ become a finite sequence of connected line segment(s) of length four units along the grid lines. Assume that $\ell$ is the total number of line segments used in the embedding. We call the point $p_i$ corresponding to the vertex $v_i \in V$ ($i = 1, 2, \ldots, n$) in the embedding as \emph{vertex points} (see Fig. \ref{graph_grid}(a) and \ref{graph_grid}(b)). Let $N$ be the set of vertex points. Therefore, 
$N = \{p_i \mid v_i \in V\}$ and $|N| = |V| (= n)$.\\
 
\noindent
{\bf (b) Extra points: } In this step, we add some extra points on each of the $\ell$ line segments (obtained in embedding step) so that unit disks centered on these points and grid points (see embedding step) form an unit disk graph as follows: (a) for each edge $(p_i, p_j)$ with only one line segment i.e., length of the edge is 4 units, we add five points at distance 0.98, 1.49, 2, 2.51, 3.02 units from $p_i$ (see edge $(p_4, p_6)$ in  Fig. \ref{graph_grid}(c)), and (b) for each edge  $(p_i, p_j)$ with more than one segment i.e., length of the edge is greater than 4 units, (i) add a point on each of the grid point on the edge other than the vertex point and name it as grid point (see square points in Fig. \ref{graph_grid}(c)), and (ii) we add four points on each of the line segments connected with $p_i$ and $p_j$ at distances 1, 1.75, 2.5, 3.25 units from $p_i$ and $p_j$, and for other line segments we add three points at distance 1 units from each other excluding the \emph{grid points} (see the edge $(p_3, p_4)$ in Fig. \ref{graph_grid}(c)). Let $A$ be the set of all points added in this step. Therefore, $|A| = 4 \ell + m$, where $\ell$ is the total number of line segments in the embedding. \\

\noindent
{\bf (c) Support point: } Add a new line segment of length 1.4 units at each of the vertex point $p_i$ without coinciding with the line segments that had already been drawn in the embedding. Observe that addition of such line segment is possible without losing the planarity as the maximum degree of $G$ is 3. We add three points $x_i, y_i, z_i$ on each of these line segments at distances 0.3, 1.1, and 1.4 units from the corresponding vertex point $p_i$. Let $S$ be the set of all points added in this step. Therefore, $|S| = 3 n$. \\

\noindent
{\bf (d) Construction of UDG: } We construct a UDG $G'=(V', E')$, where $V' = N \cup A \cup S$ and $E' = \{(u', v'): u', v' \in V'$ and the Euclidean distance between $u'$ and $v'$ is at most 1 unit\} (see Fig. \ref{graph_grid}(c)).  From Lemma \ref{cor:embedding}, $\ell=O(n^2)$. Therefore both $|V'|$ and $|E'|$ are bounded by $O(n^2)$. Hence, $G'$ can be constructed in polynomial-time. 
 \end{proof}
 
\begin{theorem}\label{thm:main}
 \textsc{Tds-Udg} is NP-complete.
\end{theorem}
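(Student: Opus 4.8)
The plan is to establish both membership in NP and NP-hardness. Membership is immediate: given a candidate subset $D' \subseteq V'$, one verifies in polynomial time that $|D'| \le k$ and that every vertex of $G'$ has a neighbour in $D'$, so \textsc{Tds-Udg} $\in$ NP. For hardness I would invoke the polynomial-time construction of Lemma \ref{embedding-lemma}, which turns a \textsc{Vc-Pla} instance $(G,k)$ into a \textsc{Tds-Udg} instance $G'$, and prove that $G$ has a vertex cover of size at most $k$ if and only if $G'$ has a total dominating set of size at most $k + \beta$, where $\beta$ is a fixed quantity depending only on $n$, $m$ and the number $\ell$ of segments, hence computable in polynomial time.

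The heart of the argument is a local analysis of the three gadget types. For the support gadget, since $z_i$ is a leaf whose only neighbour is $y_i$, every TDS must contain $y_i$, and to dominate $y_i$ it must also contain one of $x_i, z_i$; I would argue that one may always choose $x_i$ rather than $z_i$ without increasing the size, since $x_i$ simultaneously dominates $y_i$ and $p_i$. This fixes a contribution of exactly two points per vertex from $S$ and dominates every $p_i$ "for free." For the edge gadget of a single segment, the chosen distances $0.98, 1.49, 2, 2.51, 3.02$ make the gadget precisely the path $p_i - a_1 - a_2 - a_3 - a_4 - a_5 - p_j$, and I would prove the key dichotomy: if at least one of $p_i, p_j$ lies in the TDS then two interior points are necessary and sufficient to totally dominate $a_1,\dots,a_5$, whereas if neither endpoint is in the TDS then three are necessary. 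Multi-segment edges are handled by the same token, with the covered/uncovered distinction localising to the two terminal sub-segments incident to $p_i$ and $p_j$ while the grid points and interior sub-segments contribute a fixed, cover-independent amount.

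Combining these, a canonical TDS containing $\{x_i, y_i\}$ for every $i$ has size $2n + |P| + 2m + u(P)$ (plus the cover-independent constants from longer edges), where $P = \{i : p_i \in D'\}$ and $u(P)$ is the number of edges with neither endpoint in $P$. The decisive combinatorial step is the identity $\min_P\big(|P| + u(P)\big) = \tau(G)$, the vertex cover number: any such $P$ yields a vertex cover of size at most $|P| + u(P)$ by adjoining one endpoint of each uncovered edge, and conversely a minimum vertex cover attains the minimum with $u(P)=0$. This equality pins the minimum TDS of $G'$ to $\tau(G) + \beta$ with a constant offset. For the forward direction I would start from a vertex cover $C$, take all $x_i, y_i$, all $p_i$ with $i \in C$, and two interior points per edge, and verify this is a TDS of size $k + \beta$; for the reverse direction I would canonicalise an arbitrary TDS, read off $P$, and extract a vertex cover of size at most $k$ via the identity.

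I expect the main obstacle to be the tight lower bounds in the edge-gadget dichotomy—proving that an uncovered edge genuinely forces a third interior point, so the penalty-versus-cover trade-off is exact—together with checking that the multi-segment construction contributes a cover-independent constant, so that a single offset $\beta$ works uniformly across all edges.
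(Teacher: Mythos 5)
Your proposal is correct and follows essentially the same route as the paper: NP membership by direct verification, then a reduction from \textsc{Vc-Pla} via Lemma \ref{embedding-lemma}, with the same gadget analysis (the forced $y_i$ plus one of $x_i,z_i$ in each support gadget, two points per segment when an endpoint is covered, and an extra forced point on edges with neither endpoint in the solution) and the same offset $\beta = 2\ell + 2n$. Your identity $\min_P\bigl(|P|+u(P)\bigr)=\tau(G)$ is just a cleaner packaging of the paper's iterative swap, which trades the third point on a segment of an uncovered edge for an endpoint $p_i$ or $p_j$.
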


\begin{proof}
Let $T \subseteq V$ be an arbitrary subset of vertices and $k (>0)$ be an integer. Observe that, we can verify whether $T$ is a total dominating set 
such that $|T| \leq k$ or not in polynomial-time. Therefore, \textsc{Tds-Udg} $\in$ NP. 
  
To prove NP-hardness of \textsc{Tds-Udg}, we will use polynomial time reduction of \textsc{Vc-Pla} to it. We construct an instance $G'=(V',E')$ of 
\textsc{Tds-Udg} from an arbitrary instance $G=(V,E)$ of \textsc{Vc-Pla} in polynomial time using the steps mentioned in Lemma \ref{embedding-lemma}. 
Next, we prove the following claim to complete the proof of NP-hardness of \textsc{Tds-Udg}. 

{\bf Claim:} {\it $G$ has a vertex cover $C$ with $|C| \leq k$ if and only if $G'$ has a total dominating set $T$ with $|T| \leq k + 2\ell + 2n$}.

 \noindent {\bf Necessity:} Let $C\subseteq V$ be a vertex cover of $G$ such that $|C| \leq k$.  Let $N' = \{p_i \in N \mid v_i \in C\}$, i.e., $N'$ is the set of vertices (or vertex points) in $G'$ that correspond to the vertices in $C$. From each segment, we choose 2 vertices (extra points) from $A$ and corresponding to each vertex point, we choose 2 points (support points) from $S$, in the embedding. The set of chosen vertices, say $A' (\subseteq A)$, $S' (\subseteq S)$, 
together with $N'$ will form a TDS of desired cardinality in $G'$. We now discuss the process of obtaining the set $A'$. Initially $A' = \emptyset$. As $C$ is a vertex cover, every edge in $G$ has at least one of its end vertices in $C$. Let $(v_i, v_j)$ be an edge in $G$ and $v_i \in C$ (choose any of them arbitrarily if both $v_i$ and $v_j$ are in $C$). Note that the edge $(v_i, v_j)$ is represented as a sequence of line segments in the embedding. Start traversing the segments (of $(v_i, v_j)$) from $p_i$, where $p_i$ corresponds to $v_i$, and add two consecutive vertices by leaving two consecutive vertices in between starting from $p_i$ to $A'$  in the traversal (see $(p_4, p_5)$ in Fig. \ref{fig:proof} (b)). The red bold vertices are part of $A'$ while traversing from $p_4$).

\begin{figure}[!ht]
\centering
\begin{subfigure}[b]{.5\textwidth}  
  \centering
  \includegraphics[width=0.95\linewidth]{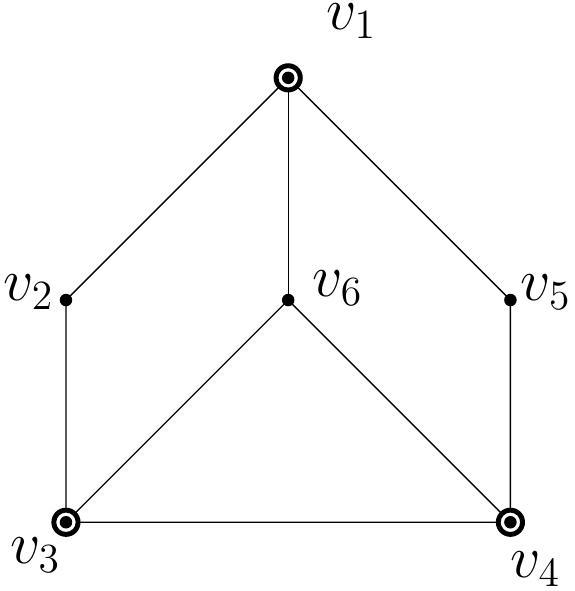}
  \caption{}  
\end{subfigure}%
\hfill
\begin{subfigure}[b]{.5\textwidth}  
  \centering
  \includegraphics[width=0.95\linewidth]{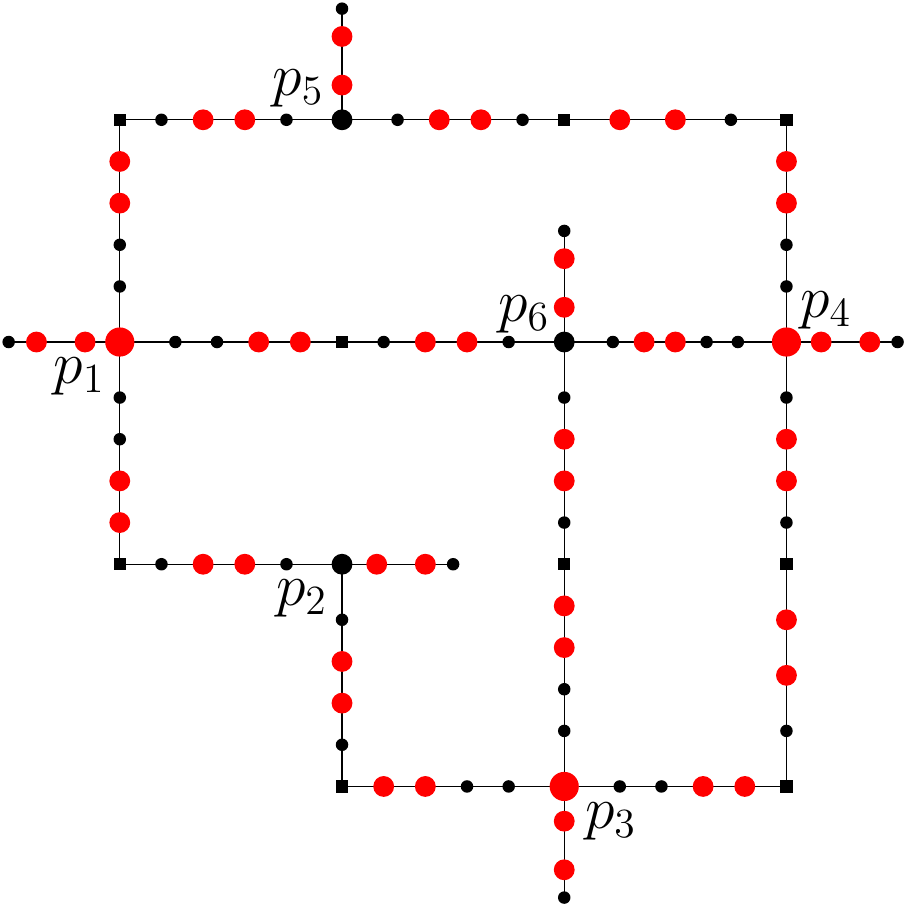}
  \caption{}  \label{fig:proofb}
\end{subfigure}
\caption{(a) A vertex cover $\{v_1,v_3,v_4\}$ in $G$, and (b) the construction of $A'$ in $G'$ (the tie between $v_3$ and $v_4$ is broken 
  by choosing $v_3$)}\label{fig:proof}
\end{figure}

Apply the above process to each edge in $G$. Observe that the cardinality of $A'$ is $2\ell$ as we have chosen 2 vertices from each segment in the embedding. Next, we choose $2n$ points from $S$ in $S'=\{x_i, y_i: p_i \in N\}$. Let $T = N' \cup A' \cup S'$.  Now, we argue that $T$ is a total dominating set in $G'$.

For each point $p_i \in N$, $p_i$ is dominated by $x_i$, $x_i$ is dominated by $y_i$, $y_i$ is dominated by $x_i$ and $z_i$ is dominated by $y_i$. So, the sets $N$ and $S$ satisfies total domination condition. Now it is remaining to prove that the set $A$ satisfies total domination condition. Observe the way we have chosen points from $A$ in $T$,  with a gap of two consecutive points, two consecutive points are chosen in $T$. For each point $p_i \in T$, $p_i$ dominates $N_G(p_i)\in A$ and the selected points of $A$ in $T$ can total dominate all the remaining points of $A$ (see the edge ($p_1,p_2$) in Fig. \ref{fig:proof} (b)).
 
Therefore, $T$ is a TDS in $G'$ and $|T| = |N'| + |A'| + |S'|\leq k + 2\ell + 2n$.
 
 \noindent {\bf Sufficiency:} Let $T \subseteq V'$ be a TDS of size at most $k + 2\ell + 2n$. 
 We prove that $G$ has a vertex cover of size at most $k$ with the help of the following claims.

 \begin{enumerate}[label=(\roman*)]
  \item Out of three support points associated with each $p_i \in N$, at least two points belongs to $T$, i.e., $|S \cap T|\geq 2n$.  
  \item Every segment in the embedding must contribute at least two points to $T$ and hence $|A \cap T| \geq 2\ell$, where $\ell$ is the total number of segments in the embedding.  
  \item If $p_i$ and $p_j$ correspond to end vertices of an edge $(v_i,v_j)$ in $G$, and if both $p_i,p_j$ are not in $T$, then there must be at least $2\ell' + 1$ vertices in  $T$ from the segment(s) representing the edge $(v_i,v_j)$, where $\ell'$ is the  number of segments representing the edge $(v_i,v_j)$ in the embedding.
 \end{enumerate}
 
 Claim (i) directly follows from the definition of total dominating set. Observe that we added points $x_i, y_i, z_i$ such that $p_i$
 is adjacent to $x_i$, $x_i$ is adjacent to $y_i$, and $y_i$ is adjacent to $z_i$ in $G'$, i.e., $\{(p_i,x_i),(x_i,y_i),(y_i,z_i)\}\subseteq E'$ for each $i$. 
 Hence, $y_i$ must be in $T$ as $y_i$ is the only vertex which can  dominate $z_i$ and either $x_i$ or $z_i$ must be in $T$ to dominate $y_i$.
Therefore, any total dominating set of $G'$ must contain two support points in $T$ out of three support points associated with 
$p_i, 1\leq i \leq n$, i.e., $|S \cap T|\geq 2n$.
 
Claim (ii) follows from the fact that only consecutive points are adjacent (in $G'$) on any segment in the embedding. Let $\eta$ be a segment in the embedding having vertices $q_i,q_{i+1},q_{i+2}$, and $q_{i+3}$. On contrary, assume that $\eta$ has only one of its vertices in $T$. Note that  only $q_i$ can not be in $T$.
 If $q_i$ present in $T$, then $q_{i+2}$ is not dominated by any point, which is a contradiction to the fact that $T$ is a TDS. If $q_{i+1}$ is the only point in $T$ then $q_{i+1}$ is not dominated by any point. If $q_{i+2}$ will be chosen as the only point from $\eta$ in $T$ then $q_{i+2}$ is not dominated by any other point  and finally if $q_{i+3}$ will be chosen then $q_{i+1}$ is not dominated by any other point. In all cases, we arrived at a contradiction.
 
Claim (iii) follows from the definition of total dominating set that any point chosen in the solution set dominates all its neighbours other than itself.  Here any point selected from a segment in $T$ has exactly two neighbours other than itself. So it can dominate at most 2 points. There are $\ell'$ segments between two node points $p_i$ and $p_j$ having $4\ell'+1$ number of points and both $p_i$ and $p_j$ are not in $T$. So, the minimum number of points required in $T$ to ensure total domination is $\ceil{\frac{4\ell'+1}{2}}=2\ell'+1$.

 Now, we will show that, by removing and/or replacing some vertices in $T$, a set of at most $k$ points from $N$ can be chosen such that the corresponding vertices form a vertex cover in $G$. The vertices in $S$ account for $2n$ vertices in $T$ (due to Claim (i)). Let $T = T \setminus S$ and $C = \{v_i \in V \mid p_i \in T \cap N\}$. If any edge $(v_i,v_j)$ in $G$ has none of its end vertices in $C$, then we do the following: consider the sequence of segments representing the edge $(v_i,v_j)$ in the embedding. Since, both $p_i$ and $p_j$  are not in $T$, there must exist a segment having three vertices in $T$ (due to Claim (iii)). Consider the segment having its three  vertices in $T$. Delete any one of the vertices on the segment and introduce $p_i$ (or $p_j$). Update $C$ and repeat the process till every edge has at least one of its end vertices in $C$. Due to Claim (ii), $C$ is a vertex cover in $G$ with $|C|\leq k$. Therefore, \textsc{Tds-Udg} is NP-hard. As  \textsc{Tds-Udg} $ \in$ NP   and \textsc{Tds-Udg} $\in$ NP-hard, \textsc{Tds-Udg} $\in$ NP-complete. 
\end{proof}

\section{Approximation Algorithm}\label{sec:apx}
In this section, we propose an algorithm to solve the TDS problem in UDGs, which produces an 8-factor approximation result. The worst case time complexity of our proposed algorithm is $O(n \log k)$, where $k$ is the size of the output of our algorithm. 

Let $P$ denote the set of $n$ points (center of the disks) given in the plane $\mathbb{R}^2$. We use $\Delta(S)$ to denote the unit disks centered at the points in a subset $S \subseteq P$. A dominating set in unit disk graphs (\textsc{Ds-Udg}) $D\subseteq P$ of the set of disks $\Delta(P)$ is said to be an independent \textsc{Ds-Udg} if for each pair $p$, $q\in D$, $p\not \in N_G[q]$.

The procedure of generating a \textsc{Tds-Udg} for a given points set $P$ in $\mathbb{R}^2$ is described in Algorithm \ref{algo:tds_apprx}.

\begin{algorithm}[!ht]
 \caption{Total dominating set in $P$}\label{algo:tds_apprx}
\begin{algorithmic}[1]
\Require A set of disks $\Delta(P)$
\Ensure A total dominating set $T$ of $\Delta(P)$
 \State $D \leftarrow \emptyset$, and $T \leftarrow \emptyset$
	  \While {($P\neq \emptyset$)}
	      \State choose an arbitrary point $p\in P$
	      \State $D \leftarrow D \cup p$; $P\leftarrow P\setminus N_G[p]$
	  \EndWhile
\For {every $p \in D$} \label{loop:sec_for_start}
      \If{$N_G(p)\cap T=\emptyset$}
      \State let $q \in N_G(p)$
	  \State $T = T \cup \{q\}$\label{line:add_v}
	  \EndIf
	  \EndFor
	  \State $T = T \cup D$ \label{line:add_w}
 \State \bf{return} $T$
 
\end{algorithmic}
\end{algorithm}

\begin{lemma} \label{lemma-3}
$T$ returned by Algorithm \ref{algo:tds_apprx} is a \textsc{Tds-Udg} for the set of unit disks $\Delta(P)$.
\end{lemma}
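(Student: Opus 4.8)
The plan is to verify the defining property of a total dominating set directly from the structure of the algorithm: I must show that every vertex $v \in V$ has some neighbour in $T$, i.e. $N_G(v) \cap T \neq \emptyset$. The argument splits naturally according to the two phases of Algorithm \ref{algo:tds_apprx}, and the real content lies entirely in one of the two cases.

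First I would analyse the while loop and show that the set $D$ it produces is an independent dominating set of $\Delta(P)$. Independence holds because each time a point $p$ is added to $D$ we delete its whole closed neighbourhood $N_G[p]$ from $P$, so no point chosen later can be adjacent to $p$. Domination holds because the loop terminates only when $P=\emptyset$, and a point leaves $P$ precisely when it is chosen into $D$ or lies in $N_G[p]$ for some chosen $p$; hence every vertex of $V$ lies in $N_G[d]$ for some $d\in D$. Since $D\subseteq T$ after line \ref{line:add_w}, this already settles every $v\notin D$: domination yields a $d\in D$ with $v\in N_G[d]$, and because $v\neq d$ we get $d\in N_G(v)\cap T$, so $v$ has a neighbour in $T$.

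The remaining, and genuinely necessary, case is $v\in D$. Here the dominating-set property gives nothing: because $D$ is independent, no member of $D$ has a neighbour inside $D$, so if $T$ were simply $D$ the vertices of $D$ would dominate everything outside $D$ yet have no dominator of their own. This is exactly what the for loop repairs. I would argue that, once the loop finishes, $N_G(p)\cap T\neq\emptyset$ for every $p\in D$: when $p$ is processed, either this intersection is already nonempty, or a neighbour $q\in N_G(p)$ is explicitly inserted into $T$. Because $T$ only ever grows and the final union with $D$ does not remove anything, this property is preserved to the end. Hence each $v\in D$ also has a neighbour in the final $T$. Combining the two cases gives $N_G(v)\cap T\neq\emptyset$ for all $v\in V$, so $T$ is a \textsc{Tds-Udg}.

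The step requiring the most care is the handling of the vertices of $D$ just described, since it is what makes the for loop indispensable rather than cosmetic; everything else is a routine maximal-independent-set argument. A secondary technical detail is the well-definedness of the choice ``let $q\in N_G(p)$'', which presupposes $N_G(p)\neq\emptyset$, i.e. that $\Delta(P)$ contains no isolated vertex. This is no real restriction, as a graph with an isolated vertex admits no total dominating set whatsoever; I would simply record it as a standing feasibility assumption on the input.
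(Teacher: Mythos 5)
Your proof is correct and follows essentially the same route as the paper's: build the independent dominating set $D$ in the while loop (which already totally dominates $P\setminus D$), then use the for loop to supply each $p\in D$ with a neighbour in $T$. Your explicit treatment of the isolated-vertex caveat is a detail the paper only addresses implicitly (``existence of $q$ is guaranteed due to the input constraint of the problem''), but it does not change the substance of the argument.
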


\begin{proof}
Before calculating a \textsc{Tds-Udg} $T$ the Algorithm \ref{algo:tds_apprx} calculate an independent \textsc{Ds-Udg} $D$ (see {\bf while} loop in line number 2 of the algorithm), which ensures domination for the set of unit disks $\Delta(P)$ and total domination for the points $P\setminus D$. Next, to obtain total domination in $D$, for each point $p\in D$ the algorithm ensures the existence of a point $q\in N_G(p)$ in $T$ (see {\bf for} loop in line number 5 of the algorithm). The selected points in $T$ along with $D$ ensures total domination for the set of unit disks $\Delta(P)$.
\end{proof}

\begin{lemma}\label{apx:apx}
$|T| \leq 8|OPT|$, where $OPT$ is a \textsc{Tds-Udg} for the unit disks $\Delta(P)$ of minimum size.
\end{lemma}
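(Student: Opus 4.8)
The plan is to reduce the whole estimate to two inequalities, $|T|\le 2|D|$ and $|D|\le 4|OPT|$, whose product is the claimed factor of $8$. The first is purely combinatorial and can be read directly off Algorithm~\ref{algo:tds_apprx}. The returned set is $T=D\cup T'$, where $T'$ is the collection of neighbours added inside the \textbf{for} loop. At most one vertex is added for each $p\in D$, and every added vertex lies in $N_G(p)$ for some $p\in D$; since $D$ is independent, such a vertex cannot itself belong to $D$. Hence $T'$ and $D$ are disjoint and $|T'|\le|D|$, so $|T|=|D|+|T'|\le 2|D|$. I would isolate this step first precisely because it quarantines the only genuinely geometric part of the argument, namely the comparison of the independent set $D$ with a minimum total dominating set.

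For the second inequality I would set up a charging (assignment) map. Because $OPT$ is in particular a dominating set, every $d\in D$ has a neighbour in $OPT$; fix one and call it $\phi(d)$, so that $0<\|d-\phi(d)\|\le 1$. Then $|D|\le\sum_{o\in OPT}|\phi^{-1}(o)|$, and it suffices to bound the size of a single fibre $\phi^{-1}(o)$. All vertices of a fibre lie in the closed unit disk centred at $o$ and, being part of $D$, are pairwise at distance strictly greater than $1$. The standard angular packing estimate then applies: for two such points $a,b$, the law of cosines shows that $\|a-o\|,\|b-o\|\le 1$ together with $\|a-b\|>1$ is incompatible with $\angle a\,o\,b\le 60^\circ$, so every pair subtends an angle exceeding $60^\circ$ at $o$. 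As the angles around $o$ sum to $360^\circ$, a fibre contains at most five points.

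This is exactly where the main obstacle sits: the naïve angular count gives $5$ per optimal vertex, which yields only $|D|\le 5|OPT|$ and hence the weaker bound $|T|\le 10|OPT|$. To reach $8$ I must shave the per-fibre bound from $5$ to $4$ (equivalently, rule out the tight pentagonal configuration of five $D$-vertices around a single $o\in OPT$). The leverage I would try to exploit is that $OPT$ is a \emph{total} dominating set rather than merely a dominating set: every $o\in OPT$ must have its own neighbour $o'\in OPT$ with $\|o-o'\|\le 1$, so one angular sector around $o$ must accommodate $o'$ as well. Making this reduction rigorous—carefully arguing that the simultaneous presence of $o'$ and of five pairwise-far $D$-vertices in the unit disk around $o$ is infeasible—is the delicate geometric heart of the proof, and the step I expect to be hardest; an alternative I would keep in reserve is to charge the added set $T'$ directly against $OPT$ rather than bounding it crudely by $|D|$. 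Once the per-fibre bound of $4$ is in place, summing over $o\in OPT$ gives $|D|\le 4|OPT|$, and combining with $|T|\le 2|D|$ yields $|T|\le 8|OPT|$.
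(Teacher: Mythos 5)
Your first inequality, $|T|\le 2|D|$, is correct and is exactly the counting the paper uses implicitly (``for each disk at most 2 points can be chosen in the solution set $T$''). The gap is in the second half, and it is not merely a step you left unfinished: the per-fibre bound of $4$ that your plan requires is \emph{false}, so the route you sketch cannot be completed. Take $o\in OPT$ at the origin and five points at distance exactly $1$ from $o$ with consecutive angular gaps of $72^\circ$; their pairwise distances are $2\sin 36^\circ\approx 1.18>1$, so all five can belong to the independent set $D$. Place $o'$ at distance $1$ from $o$, bisecting the angle between two of the five points, and let $P$ consist of these seven points. Then $\{o,o'\}$ is a minimum \textsc{Tds-Udg}, the greedy \textbf{while} loop can output exactly those five points as $D$, and three of the five are adjacent to no optimal point other than $o$ (the remaining two may also be assigned to $o$ by your map $\phi$), so a single fibre genuinely contains five points. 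The totality of $OPT$ gives no per-vertex leverage here, because $o'$ is not required to be in $D$ nor to be far from the $D$-points; independence constrains only the $D$-points among themselves. Hence no argument can shave the per-vertex count from $5$ to $4$, and your reserve idea (charging $T'$ directly against $OPT$) is likewise not developed.

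What rescues the factor $8$ --- and what the paper actually does --- is to amortize over adjacent \emph{pairs} of optimal points rather than over single vertices: for $p,q\in OPT$ with $p\in N_G(q)$, every point of $D$ dominated by $p$ or $q$ lies in the union of two unit disks whose centers are at most $1$ apart, and such a union contains at most $8$ points that are pairwise more than $1$ apart (this is the paper's key geometric claim, supported by Fig.~\ref{fig:apx}). In the example above, the pair $\{o,o'\}$ indeed collects only $5\le 8$ points of $D$ between its two fibres. With the pair bound, two optimal points account for at most $8$ points of $D$, hence at most $16$ points of $T$, which is the claimed ratio. (Even this requires some care, since $OPT$ need not admit a perfect matching into adjacent pairs --- a point the paper itself glosses over --- but the pairwise charge is the essential ingredient your proposal is missing.)
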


\begin{proof}
Consider an arbitrary point $p \in OPT$. 
As $OPT$ is a TDS of minimum size there must exist a point $q \in OPT$ such that $p\in N_G(q)$. The cardinality of $T$ follows from the fact that for every pair $p,q \in OPT$ such that $p\in N_G(q)$, there may exist at most 8 disks in an independent \textsc{Ds-Udg} $D$ of $\Delta(P)$ that can contain the points $p$ and/or $q$ (see Fig. \ref{fig:apx} for reference). For each disk at most 2 points can be chosen in the solution set $T$ to become a \textsc{Tds-Udg} (see line numbers \ref{loop:sec_for_start}-\ref{line:add_w} of Algorithm \ref{algo:tds_apprx}), which leads at most 16 points chosen by Algorithm \ref{algo:tds_apprx} against 2 points (namely, $p$ and $q$) chosen in optimal solution. Hence $|T|\leq 8|OPT|$. 
\end{proof}

\begin{figure}[!ht]
  \centering
  \includegraphics[width=0.65\linewidth]{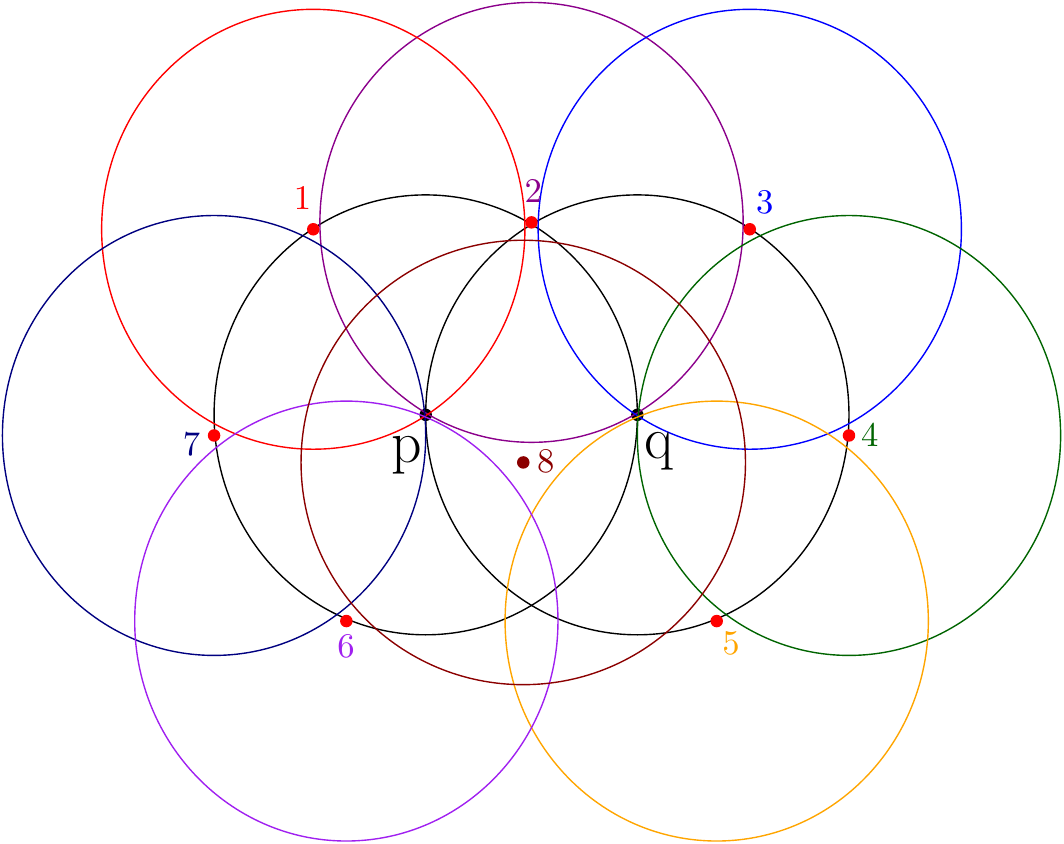}
\caption{Illustration of Lemma \ref{apx:apx}}\label{fig:apx}
\end{figure}
 
\begin{lemma}\label{apx:tc}
The worst case time require to generate a \textsc{Tds-Udg} for the set of disks $\Delta(P)$ by Algorithm \ref{algo:tds_apprx} is $O(n\log k)$, where $k$ is the size of the output.
\end{lemma}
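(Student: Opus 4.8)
The plan is to bound the three phases of Algorithm \ref{algo:tds_apprx} separately and show that the \textbf{while} loop dominates. First I would relate the sizes of $D$ and $T$: since $T = D \cup \{q_p : p \in D\}$ with at most one point $q_p$ added for each $p \in D$ (lines \ref{loop:sec_for_start}--\ref{line:add_v}) and $D \subseteq T$ after line \ref{line:add_w}, we have $|D| \le |T| = k \le 2|D|$, so $|D| = \Theta(k)$ and $\log|D| = \Theta(\log k)$. This identity is exactly what replaces a $\log n$ factor by $\log k$: every logarithmic-cost operation below is performed in a dictionary whose size is $|D|$, not $n$.

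The heart of the argument is an efficient reimplementation of the \textbf{while} loop, which I would reinterpret as the greedy computation of a maximal independent set: processing the points of $P$ in a fixed order, a point $p$ is inserted into $D$ exactly when no point already in $D$ lies within distance $1$ of it, and discarded as a dominated neighbour otherwise. This produces the same kind of independent \textsc{Ds-Udg} $D$ as the stated loop while touching each point once. I would overlay a grid of cell side $1/\sqrt{2}$; since such a cell has diameter $1$ and $D$ is independent, each cell contains \emph{at most one} point of $D$. The occupied cells are kept in a balanced binary search tree keyed by integer cell coordinates, so an insertion or a look-up costs $O(\log|D|) = O(\log k)$. To test the current point $p$ I compute its cell and inspect the $O(1)$ block of surrounding cells that could hold a point within distance $1$ of $p$; each is located by one tree look-up and holds at most one candidate, so the distance tests are $O(1)$ and the whole test is $O(\log k)$. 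Summed over the $n$ points (including the $\le |D|$ insertions) this is $O(n \log k)$.

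For the \textbf{for} loop and the final union (lines \ref{loop:sec_for_start}--\ref{line:add_w}) I would iterate over the $|D| = O(k)$ chosen points, maintaining the growing $T$ in the same grid-plus-tree structure. Testing $N_G(p)\cap T=\emptyset$ is cheap: because $D$ is independent, the only point of $D$ within distance $1$ of $p$ is $p$ itself, and every added point $q_c$ lies within distance $1$ of its centre $c$, so a $q_c$ near $p$ forces $c$ within distance $2$ of $p$; as centres are pairwise more than distance $1$ apart, only $O(1)$ such centres exist, hence only $O(1)$ points of $T$ lie near $p$ and the test costs $O(\log k)$. A witness neighbour $q$ can be recorded for free during the \textbf{while} loop whenever a point is discarded because it lies within distance $1$ of $p$; for the residual centres whose neighbours were all discarded earlier, $q$ is found by an $O(1)$-cell grid search on $P$. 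Since every original point falls in the search block of only $O(1)$ centres, this search costs $O(n)$ in aggregate, and the phase totals $O(k\log k)+O(n)=O(n\log k)$. Adding the three phases gives $O(n\log k)$, as claimed.

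The step I expect to be the main obstacle is the per-point $O(\log k)$ guarantee for the \textbf{while} loop. A literal reading of ``$P \leftarrow P\setminus N_G[p]$'' invites an implementation that rescans a centre's whole neighbourhood and can spend $\Theta(n)$ time per centre; the saving comes entirely from the geometric sparsity of an independent set, which caps at $O(1)$ both the number of candidate centres near any query point and the number of $T$-points near any centre, leaving the cell-dictionary look-up as the only super-constant cost. The remaining work is the routine geometric bookkeeping — fixing the side length that forces one $D$-point per cell and the size of the surrounding block to inspect for a distance-$1$ query — together with the observation that the dictionary holds only $\Theta(k)$ keys, which is what pins the bound to $\log k$ rather than $\log n$.
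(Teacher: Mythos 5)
Your proposal is correct and follows essentially the same route as the paper: both impose a uniform grid on the point set, keep the occupied cells in a balanced binary search tree (so each of the $n$ points is handled by $O(1)$ cell look-ups, each costing $O(\log k)$ because only $O(k)$ cells ever hold solution points), and use the geometric sparsity of the independent set $D$ to cap the per-point work at a constant number of distance tests. Your treatment is somewhat more explicit than the paper's on two points the paper glosses over --- the choice of cell side ($1/\sqrt{2}$ versus $1\times 1$, giving one $D$-point per cell rather than a constant) and how a witness neighbour $q \in N_G(p)$ is actually located in $O(n)$ aggregate time --- but these are refinements of the same argument, not a different one.
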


\begin{proof}
We now describe the time complexity of the Algorithm \ref{algo:tds_apprx} for computing a \textsc{Tds-Udg} $T$ of $\Delta(P)$ as follows. 
Let us assume that $\cal R$ is an axis parallel rectangular region containing the points in $P$. We partition $\cal R$ into grid of size $1 \times 1$. 
A point $p_i=(x_i, y_i) \in P$ lies in the grid cell indexed by $[\lfloor{x_i}\rfloor, \lfloor{y_i}\rfloor]$ for $i = 1, 2, \ldots, n$. 
Each grid cell is attached with a list of points in $P$ that are belongs to that cell. We construct an independent dominating set $D$ for UDG 
corresponding to $\Delta(P)$. While considering a point $p_i \in P$, we inspect all members of $D$ which are attached to all 9 cells $[\alpha, \beta]$, where $\lfloor{x_i}\rfloor-1 \leq \alpha \leq \lfloor{x_i}\rfloor+1$ and $\lfloor{y_i} \rfloor-1 \leq \beta \leq \lfloor{y_i} \rfloor+1$. 
If there does not exists any unit disk $d$ in $D$ that contains the point $p_i$, we add $p_i$ in $D$. Observe that, at the end of considering all the points in $P$, $D$ will be an independent \textsc{Ds-Udg} for the set of disks in $\Delta(P)$. Initially, take $T=\emptyset$. Now, for each point $p \in D$, if there does not exist any point $q$ in $T$ such that $q \in N_G(p)$, then add $q$ in $T$, and existence of $q$ is guaranteed due to the input constraint of the problem.
Finally, update $T=T\cup D$. After ensuring the existence of a point $q \in N_G(p)$ for each point $p \in D$, observe that $T$ is a \textsc{Tds-Udg} for the set of disks in $\Delta(P)$. Note that, (i) a grid cell may contain at most 6 points in $T$, and (ii) the number of grid cells to be inspected while processing a point $p_i \in P$ is at most 9. We use a height balanced binary tree to store the indices of the grid cells containing a non-zero number of points in $T$. Thus, the time complexity for processing a point $p \in P$ is $O(\log k)$, where $k=|T|$ and $|P|=n$. Thus, the result.
\end{proof}

\begin{theorem}
Algorithm \ref{algo:tds_apprx} is an 8-factor approximation algorithm for the \textsc{Tds-Udg} problem. The running time of the algorithm is $O(n\log k)$, where $k$ is the size of the output. 
\end{theorem}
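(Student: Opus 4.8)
The plan is to assemble the theorem directly from the three lemmas already established for Algorithm \ref{algo:tds_apprx}, since each of the three assertions in the statement corresponds to one of them. There is no new combinatorial content to produce; the work has been front-loaded into the lemmas, and the theorem is the clean summary. So my proof would proceed in three short movements, invoking the lemmas in turn.

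First I would settle correctness. By Lemma \ref{lemma-3}, the set $T$ that the algorithm returns is a valid \textsc{Tds-Udg} for $\Delta(P)$: the independent dominating set $D$ built in the \textbf{while} loop already totally dominates every point of $P \setminus D$, and the \textbf{for} loop guarantees that each $p \in D$ acquires a neighbour in $T$, so after the final update $T \leftarrow T \cup D$ every vertex has a dominator other than itself. This confirms that the algorithm indeed outputs a feasible solution, which is the prerequisite for speaking about an approximation ratio at all.

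Next I would invoke the approximation bound. By Lemma \ref{apx:apx}, $|T| \leq 8\,|OPT|$, where $OPT$ is a minimum \textsc{Tds-Udg} of $\Delta(P)$. Since $T$ is feasible (from the previous step) and its size is within a factor $8$ of the optimum, the algorithm is an $8$-factor approximation. I would simply cite the charging argument of the lemma: each adjacent pair $p, q$ forced into $OPT$ can be ``blamed'' for at most $8$ independent disks of $D$, each of which contributes at most $2$ points to $T$, giving the $16$-against-$2$ accounting that yields the factor $8$.

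Finally I would record the running time: by Lemma \ref{apx:tc}, the grid-based implementation together with the height-balanced binary tree over occupied cells processes each of the $n$ points in $O(\log k)$ time, for a total of $O(n \log k)$, where $k = |T|$ is the output size. Combining the three, the theorem follows. The only place I would be careful is to state explicitly that feasibility (Lemma \ref{lemma-3}) is logically prior to both the ratio and the timing claims, so that the phrase ``$8$-factor approximation algorithm'' is meaningful; beyond that bookkeeping there is no genuine obstacle, as the substantive difficulty lives entirely in Lemmas \ref{apx:apx} and \ref{apx:tc}.
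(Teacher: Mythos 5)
Your proposal is correct and follows essentially the same route as the paper, whose proof simply cites Lemma \ref{apx:apx} and Lemma \ref{apx:tc}; your only addition is to invoke Lemma \ref{lemma-3} explicitly for feasibility, which the paper leaves implicit but which makes the argument slightly more complete.
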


\begin{proof}
The proof of the theorem follows from Lemma \ref{apx:apx} and Lemma \ref{apx:tc}. 
\end{proof}

 \section{Approximation Scheme}\label{sec:ptas}
 In this section, we propose a polynomial time approximation scheme (PTAS) for the TDS problem in unit disk graphs. We use shifting strategy  \cite{hochbaum} technique to propose a PTAS. Let $P$ be a point set (centers of the disks) given in a rectangular region ${\cal R}$ along with a fixed integer $k \geq 1$.

 We use a two-level nested shifting strategy to propose a PTAS for the said problem.
 The first level of shifting strategy applied in the horizontal direction on ${\cal R}$. 
 There are $k$ iterations in the first level and the $i$-th iteration $(1\leq i\leq k)$ partition the region ${\cal R}$ into many horizontal strips, where the first strip is of width $2i$, and remaining strips other than the last strip are of width $2k$. 
 The width of last strip may be less than $2k$.  
 Without loss of generality, assume that each point lying on the left boundary of a strip belong to its left adjacent strip.
 Now consider all the non-empty horizontal strip $H$, and apply second level of shifting strategy on the vertical direction.
  In the second level of shifting strategy, the $j$-th iteration $(1\leq j\leq k)$ partition each non-empty horizontal strip $H$ into square/rectangular cells of size $2j\times \ell$ for the first cell and $2k\times \ell$ for all other cells, where $\ell$ defines the width of the strip $H$ ($\ell=2i$ for the first strip and $\ell=2k$ for all other strips except last strip).

We consider each non-empty $2k \times 2k$ squares (conceptually extending the smaller cells into  $2k \times 2k$ square) and find the optimal solution of each squares.  
The union of the optimal solution of each $2k \times 2k$ squares give a feasible solution of each strip $H$. 
Finally, we take the union of solutions of  each non-empty horizontal strips to get a feasible solution of the problem in a single iteration. 
In the same process, we get the feasible solutions of all the iterations in the first level. 
We report the solution $T$, having minimum cardinality among all the solutions generated in each iterations as the solution of  the \textsc{Tds-Udg} problem.

  Now, we discuss the procedure of getting optimal solution from each $2k \times 2k$ square. 
  We first partition the cell of size $2k \times 2k$ into $(\ceil{2\sqrt{2}k})^2$ sub-cells. 
  The size of each sub-cell is $\frac{1}{\sqrt{2}}\times \frac{1}{\sqrt{2}}$.
   Observe that, choosing any two points inside a sub-cell of size $\frac{1}{\sqrt{2}}\times \frac{1}{\sqrt{2}}$ ensures total domination for all unit disk centered in that sub-cell. Hence, the maximum number of points required to ensure total domination in a square of size $2k\times 2k$ is $2(\ceil{2\sqrt{2}k})^2$.  
   Therefore, we have to check all possible combinations of points upto $2(\ceil{2\sqrt{2}k})^2$ to get an optimal solution in a cell $\chi$ of size $2k\times 2k$. 
   Note that, along with the points inside a cell $\chi$, the points within 1 unit apart from $\chi$ is also plays a crucial role to get an optimum solution of $\chi$. 
   Let $n_\chi$ be the number of points in $P$ whose corresponding disks has a portion in the cell $\chi$ ($n_\chi$ includes the points inside $\chi$ along with the points within 1 unit apart from $\chi$). 
     Then, we have to choose at most $O(n_\chi^{2(\ceil{2\sqrt{2}k})^2})$ combinations of points for getting the optimum solution for the \textsc{Tds-Udg} problem in a cell $\chi$ of size $2k\times 2k$. 
     Since the points in $P$ centered in a cell is disjoint from that of the other cells, and a point in $P$ can participate in computing the optimum solution of at most 9 cells, we have the following result.
\begin{lemma}\label{ptas:tc}
The total time required for the $(i,j)$-th iteration of the algorithm is $O(n^{2(\ceil{2\sqrt{2}k})^2})$.
\end{lemma}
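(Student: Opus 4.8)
The plan is to express the cost of one $(i,j)$-th iteration as the sum of the costs incurred on the individual cells it creates, and then to bound this sum using the fact that each input point matters to only a constant number of cells. Throughout, write $\sigma = 2(\ceil{2\sqrt{2}k})^2$ for the exponent appearing in the statement; note that since $k\ge 1$ we have $\sigma\ge 1$. For a fixed pair $(i,j)$ the two levels of shifting determine a partition of ${\cal R}$ into cells $\chi$, each conceptually a $2k\times 2k$ square, and the discussion preceding the lemma shows that the optimal sub-solution on a cell $\chi$ is found by enumerating $O(n_\chi^{\sigma})$ combinations of points, where $n_\chi$ counts the points of $P$ lying inside $\chi$ together with those within one unit of $\chi$. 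Hence the running time of the $(i,j)$-th iteration is $\sum_{\chi} O(n_\chi^{\sigma})$, the sum ranging over the non-empty cells.

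First I would establish the counting bound $\sum_{\chi} n_\chi \le 9n$. Each point $p\in P$ lies in exactly one cell, and $p$ contributes to $n_{\chi'}$ only if it lies within one unit of $\chi'$. Since every cell has side length $2k\ge 2$, any cell not sharing an edge or corner with $p$'s own cell is separated from it by a full intervening cell and is therefore at distance at least $2k\ge 2>1$ from $p$; consequently $p$ can be within one unit only of the (at most nine) cells forming the $3\times3$ block centred on its own cell. Summing the per-point contributions then yields $\sum_{\chi} n_\chi \le 9n$.

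The final and most delicate step is to pass from this linear bound on $\sum_{\chi} n_\chi$ to the claimed bound on $\sum_{\chi} n_\chi^{\sigma}$. The naive estimate, replacing each $n_\chi$ by $n$ and multiplying by the number of non-empty cells, loses a factor proportional to the cell count and would give an exponent one larger than claimed, so it does not suffice. Instead I would invoke superadditivity of the $\sigma$-th power for $\sigma\ge 1$: since every $n_\chi$ is at most $\sum_{\chi'} n_{\chi'}$, we have $n_\chi^{\sigma}=n_\chi\,n_\chi^{\sigma-1}\le n_\chi\big(\sum_{\chi'} n_{\chi'}\big)^{\sigma-1}$, and summing over $\chi$ gives $\sum_{\chi} n_\chi^{\sigma}\le\big(\sum_{\chi'} n_{\chi'}\big)^{\sigma}\le(9n)^{\sigma}$. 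Because $k$ is a fixed constant of the approximation scheme, $9^{\sigma}$ is a constant, so the iteration runs in time $O\big((9n)^{\sigma}\big)=O(n^{\sigma})=O\big(n^{2(\ceil{2\sqrt{2}k})^2}\big)$, as required. The only genuine obstacle is recognising that the per-cell costs must be aggregated through this superadditivity inequality rather than by a crude multiplication of the number of cells by the worst-case per-cell cost; the remaining steps are routine bookkeeping.
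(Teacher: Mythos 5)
Your proof is correct and follows essentially the same route as the paper: the per-cell enumeration cost $O(n_\chi^{2(\ceil{2\sqrt{2}k})^2})$ taken from the discussion preceding the lemma, the observation that each point of $P$ participates in at most $9$ cells, and aggregation of the per-cell costs over the cells of the $(i,j)$-th iteration. The only difference is one of rigor: you spell out, via superadditivity of $x\mapsto x^{\sigma}$ for $\sigma\geq 1$, the step $\sum_{\chi} n_\chi^{\sigma}\leq \bigl(\sum_{\chi} n_\chi\bigr)^{\sigma}\leq (9n)^{\sigma}=O(n^{\sigma})$, which the paper leaves implicit in the sentence ``a point in $P$ can participate in computing the optimum solution of at most 9 cells.''
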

\begin{proof}
The feasible solution of the $(i,j)$-th iteration is the union of the optimum solutions of all the cells constructed in that iteration. 
Finally, the algorithm returns the minimum among the $k^2$ feasible solutions corresponding to $k^2$ iterations.
\end{proof}
\begin{theorem}
Given a set $P$ of $n$ points (center of the unit disks) in ${\cal R}$ and an integer $k \geq 1$, a total dominating set of size at most $(1+\frac{1}{k})^2 \times |OPT|$ can be computed in $O(k^2n^{{2(\ceil{2\sqrt{2}k})^2})}$ time, where $OPT$ is the optimum solution.
\end{theorem}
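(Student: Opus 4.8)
The plan is to prove the two assertions separately: the running time is essentially immediate, while the approximation ratio is the substance and is obtained by analysing the two-level shifting scheme. For the running time, the algorithm performs one iteration for each pair $(i,j)$ with $1 \le i,j \le k$, hence $k^2$ iterations in total, and by Lemma \ref{ptas:tc} each such iteration costs $O(n^{2(\ceil{2\sqrt{2}k})^2})$; multiplying gives the claimed $O(k^2 n^{2(\ceil{2\sqrt{2}k})^2})$ bound. The feasibility of the reported set is also easy: every disk is centred in exactly one cell of every $(i,j)$-grid, and the cell that contains it is solved so as to totally dominate all disks centred in it (using candidate dominators within distance $1$ of the cell), so the union of the cell optima is a total dominating set of $\Delta(P)$ in each iteration.

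For the ratio I would isolate a one-dimensional shifting lemma and then apply it twice. In one fixed direction, suppose each strip (resp.\ cell) subproblem is solved exactly, where the subproblem for a region $R$ is to totally dominate all disks centred in $R$ using centres lying within distance $1$ of $R$. The key observation is that for a strip $S$ the set $OPT \cap S^{+}$, where $S^{+}$ denotes $S$ enlarged by a margin of width $1$ on each side, is feasible for the subproblem of $S$: any disk centred in $S$ has, in $OPT$, a dominator at distance at most $1$, and that dominator therefore lies in $S^{+}$. Consequently the optimal subproblem cost for $S$ is at most $|OPT \cap S^{+}|$, and summing over the strips of the $i$-th partition gives a per-shift bound of $\sum_{p \in OPT}|\{S : p \in S^{+}\}|$.

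First I would then average this bound over the $k$ shifts. A point $p \in OPT$ is charged to its own strip in every shift, contributing $k$ in total, and is charged a second time only in those shifts for which $p$ falls inside the width-$1$ margin of a strip boundary. Because the strips have width $2k$, the boundaries advance in steps of $2$ as $i$ increases, and across all $k$ shifts the boundary positions occupy distinct even offsets, so a fixed $p$ lies in such a margin for at most one shift once boundary points are assigned consistently to one side (the paper's convention that a point on the left boundary belongs to its left strip). Hence the extra charges over all shifts total at most $|OPT|$, giving $\sum_{i=1}^{k}\sum_{S}\mathrm{OPT}_i(S) \le (k+1)|OPT|$, so the best shift achieves $\sum_{S}\mathrm{OPT}_{i^{*}}(S) \le (1+\tfrac{1}{k})|OPT|$. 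This is exactly the shifting lemma of \cite{hochbaum} specialised to total domination.

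Finally I would nest the two levels. Applying the lemma horizontally yields a shift $i^{*}$ whose strip-optimal aggregate is at most $(1+\tfrac{1}{k})|OPT|$; writing $Q$ for the union of those strip optima, $Q$ is itself a feasible set whose restriction $Q \cap \chi^{+}$ dominates every disk in a cell $\chi$. Treating $Q$ as the object to be refined and the exactly computed cell optima as the base case, I would apply the lemma a second time in the vertical direction inside the fixed horizontal strips, obtaining a vertical shift $j^{*}$ with $\sum_{\chi}|\mathrm{CELLOPT}(\chi)| \le (1+\tfrac{1}{k})|Q| \le (1+\tfrac{1}{k})^{2}|OPT|$. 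Since the algorithm returns the minimum-cardinality solution over all $k^{2}$ iterations, the returned $T$ satisfies $|T| \le (1+\tfrac{1}{k})^{2}|OPT|$. The main obstacle I anticipate is precisely this nesting: I must keep the horizontal-boundary and vertical-boundary margin accounting from contaminating each other, which means confining the inner cells strictly to the current horizontal strip and charging the $x$- and $y$-overlaps to their respective shifting levels, so that the two $(1+\tfrac{1}{k})$ factors genuinely multiply rather than merely add. Driving the margin double-count down to a single shift per level—rather than two—via the boundary-side convention is the delicate quantitative step on which the clean $(1+\tfrac{1}{k})^{2}$ factor rests.
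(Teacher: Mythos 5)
Your proposal is correct in substance, but it organizes the shifting analysis differently from the paper. The paper's proof is a single \emph{flat} averaging over all $k^2$ iterations: writing $OPT'$ for the optimum points that dominate across cell boundaries in iteration $(i,j)$, it bounds $\sum_{i=1}^{k}\sum_{j=1}^{k}|T|\leq k^2|OPT|+2k|OPT|$ (each point of $OPT$ is close to a horizontal boundary for at most one index $i$, and that configuration recurs in the $k$ iterations over $j$; symmetrically for vertical), so the best iteration costs at most $(1+\frac{2}{k})|OPT|\leq(1+\frac{1}{k})^2|OPT|$ --- the horizontal and vertical excesses are \emph{added}, never multiplied. You instead run two nested one-dimensional shifting lemmas, earning a factor $(1+\frac{1}{k})$ per level and multiplying them, which is the original Hochbaum--Maass organization. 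Both routes prove the theorem; yours is the more principled accounting (the product $(1+\frac{1}{k})^2$ automatically absorbs the corner/diagonal double counts, which the paper's additive count silently ignores and survives only because $1+\frac{2}{k}<(1+\frac{1}{k})^2$), while the paper's is shorter because it never needs to introduce strip-level subproblems.

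One step of your write-up would fail as literally stated, and it is exactly the one you flagged: the inequality $\sum_{\chi}|\mathrm{CELLOPT}(\chi)|\leq(1+\frac{1}{k})|Q|$ with $Q$ the \emph{global} union of strip optima. Since the enlargement $\chi^{+}$ also extends one unit vertically, a point of $Q$ lying within distance $1$ of a horizontal strip boundary belongs to $\chi^{+}$ for cells of the \emph{adjacent} strip in every vertical shift $j$; these charges are independent of $j$ and do not average out, so optimizing over $j$ only yields $(1+\frac{1}{k})|Q|$ plus a horizontal-boundary excess. The repair is the per-strip comparison you sketch in your last paragraph: for each strip $H$, bound $\sum_{\chi\in H}\mathrm{CELLOPT}(\chi)$ by $\sum_{\chi\in H}|\mathrm{STRIPOPT}(H)\cap\chi^{+}|$, where the sum ranges only over cells of $H$, so that only vertical-boundary excess can arise; averaging over $j$ and then summing over $H$ produces one common shift $j^{*}$ with $\sum_{H}\sum_{\chi}\mathrm{CELLOPT}(\chi)\leq(1+\frac{1}{k})\sum_{H}\mathrm{STRIPOPT}(H)\leq(1+\frac{1}{k})^{2}|OPT|$, and the algorithm's minimum over iterations is no larger. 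Your tie-handling claim, by contrast, is sound as stated: with the paper's left-boundary convention the strips, their enlargements, and the margin zones can all be taken half-open, so the zones of the $k$ shifts tile the line and each optimum point incurs an extra charge in exactly one shift. With the per-strip substitution made, your argument is complete.
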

 \begin{proof}
 Using the shifting strategy analysis given by Hochbaum and maass \cite{hochbaum}, we analyze the approximation factor of our algorithm. 
 Let $OPT$ be an optimum solution for the \textsc{Tds-Udg} problem for the point set $P$, and $OPT'\subseteq OPT$ be such points chosen in $OPT$, which total dominates the points outside the boundary  of all the cells in an $(i,j)$-th iteration. 
 Let $T$ be a solution obtained by our algorithm in an iteration.
  Then, $|T|\leq |OPT|+|OPT'|$. For all the iterations of $(i,j)$ ($1\leq i,j \leq k$), we have 
     $\sum\limits_{i=1}^{k}\sum\limits_{j=1}^{k} |T|\leq k^2|OPT|+\sum\limits_{i=1}^{k}\sum\limits_{j=1}^{k}|OPT'|$.\\
Since any point from a cell $\chi$ chosen in $OPT$ can dominate points from no more than one horizontal strip (or vertical strip), and at most $k$ times each horizontal (or vertical) boundary appears throughout the algorithm, we have\\
 $\sum\limits_{i=1}^{k}\sum\limits_{j=1}^{k}|OPT'|\leq k|OPT|+k|OPT|$.\\
 Thus,\\
 $\sum\limits_{i=1}^{k}\sum\limits_{j=1}^{k}|T|\leq k^2|OPT|+2k|OPT|=(k^2+2k)|OPT|$.\\ Thus, $min\sum\limits_{i=1}^{k}\sum\limits_{j=1}^{k}|T|\leq (1+\frac{1}{k})^2 \times |OPT|$.
\\ the time complexity result follows from Lemma \ref{ptas:tc}. 
 \end{proof}

\section{Conclusion}\label{conclusion}
In this article, we have considered the minimum total dominating set (TDS) problem in unit disk graphs. We showed that the TDS problem is NP-hard. 
We proposed an almost linear time 8-factor approximation algorithm and a PTAS for the same problem.

\bibliographystyle{abbrv}

\end{document}